\documentclass[copyright,creativecommons]{eptcs}

\usepackage{amsmath,epsf,graphicx,amsthm}
\usepackage{iftex}
\newcommand{\keyw}[1]{{\bf #1}}

\newcommand{\dahntab}[1]{
  \newbox\mybok%
  \setbox\mybok=\hbox{\vbox{
      \begin{tabbing}
        #1
      \end{tabbing}%
    }}

  \newdimen\bokwidth%
  \bokwidth=\wd\mybok%
  \newdimen\myl%
  \myl=\textwidth%
  \divide\myl by 2%
  \divide\bokwidth by -2%
  \advance\myl by\bokwidth%
  \vrule width\myl height 0pt depth 0pt%
  \usebox\mybok%
}

\def\e{\varepsilon}
\def\P{\mbox{\it I\hskip -0.25em P}}
\def\E{\mbox{\it I\hskip -0.25em E}}

\def\goesto{\rightarrow}
\newcommand\bigone{\bf 1}

\newtheorem{definition}{Definition}
\newtheorem{theorem}{Theorem}
\newtheorem{lemma}{Lemma}
\newtheorem{corollary}{Corollary}
\newtheorem{proposition}{Proposition}
\def\beginproof{\noindent{\bf Proof.}\quad}
\def\endproof{\qed}

\title{$q$-Overlaps in the Random Exact Cover Problem}
\author{Gabriel Istrate
\institute{University of Bucharest}
\institute{Str. Academiei 14, Sector 6, 011014, \thanks{Corresponding author.}\\
Bucharest, Romania}
\email{gabrielistrate@acm.org}
\and
Romeo Negrea 
\institute{Department of Mathematics, \\ Universitatea Politehnica din Timi\c{s}oara, \\ Victoriei 2, 300006, Timi\c{s}oara, Romania}
\email{romeo.negrea@mat.upt.ro}
}

\begin{document}
\maketitle

\maketitle

\begin{abstract}
We prove lower and upper bounds for the threshold of the following decision problem: given $q\in (0,1)$ and $c>0$ what is the probability that a random instance of the $k$-Exact Cover problem \cite{cs/050837} has two solutions of overlap $qn\pm o(n)$ ? 

These results are motivated by the  {\em one-step replica symmetry breaking} approach of Statistical Physics, and the hope of using an approach based on that of \cite{cond-mat/0504070/prl} to prove that for some values of the order parameter the overlap distribution of $k$-Exact Cover has discontinuous support. 
\end{abstract}

Keywords: exact cover, overlap, probabilistic method.

\section{Introduction}

The study of {\em phase transitions in Combinatorial Optimization problems} \cite{sfi-book}, \cite{weigt-hartmann} (see also \cite{istrate:ccc00,aimath04,istrate-sharp,continuous-discontinuous-journal}) has recently motivated (and brought to attention) the geometric structure of the solution space of a combinatorial problem. Methods such as the {\em cavity method} and assumptions such as {\em replica symmetry} and {\em one step replica symmetry breaking} make significant predictions on the geometry of solution space that are a source of inspiration (and a challenge) for rigorous work. 

A remarkable advance in this area is due to M\'{e}zard et al. \cite{cond-mat/0504070/prl}
. This paper has provided rigorous evidence that for the random $k$-satisfiability problem (with sufficiently large $k$) the intuitions concerning the geometry of the solution space provided by the 1-RSB approach are correct. The evidence is based the support of the overlap distribution, shown to be 
discontinuous via a study of threshold properties for the $q$-overlap versions of $k$-SAT. 

In this paper we follow an approach based on the same idea, studying the overlap distribution of a different optimization problem, the {\em random $k$-Exact Cover} problem. The phase transition in 
this problem has been studied in \cite{kalapala-moore-journal}. Zdeborov\'{a} et al. \cite{zdeborova-1},\cite{zdeborova-2} have applied 
nonrigorous methods from Statistical Physics (the cavity approach) and have suggested that the {\em 1-step Replica Symmetry Breaking} assumption is valid. This motivates us to study the problem $q$-overlap $k$-Exact Cover (defined below), and prove lower and upper bounds on its satisfiability threshold. 

Our ultimate goal would be to show that for a certain range of the order parameter the $k$-Exact problem has a discontinuous overlap distribution. However, in this paper we cannot accomplish this goal, as the upper and lower bounds provided are too crude to guarantee this. Still, we believe that the insights provided by our partial result may be useful towards eventually obtaining such bounds. 

\section{Preliminaries} 

We assume knowledge of the method of modeling the trajectory algorithms on random inputs using difference/differential equations using the principle of deferred decision. This is by now a material in standard textbooks \cite{moore2011nature} and surveys \cite{optas-lazy-server}. We will also assume knowledge of somewhat lesser popular techniques in this area, such as the "lazy server" approach \cite{optas-lazy-server}. 

\begin{definition}\label{model} Let ${\cal D} = \{0,1,\ldots, t-1\}$, $t\geq 2$
be a fixed set. Consider the set of all $2^{t^{k}}-1$ potential
nonempty binary constraints on $k$ variables $X_{1}, \ldots, X_{k}$.
We fix a set of constraints ${\cal C}$ and define the random model
$CSP({\cal C})$.  A random formula from $CSP_{n,m}({\cal C})$ is
specified by the following procedure: (i) $n$ is the number of
variables; (ii) we generate uniformly at random, \textbf{with replacement}, $m$ clauses from all the instantiations of constraints in $\mathcal{C}$ on the $n$ variables. 

When all constraints in $\mathcal{C}$ are boolean, we write $SAT({\cal C})$ instead of $CSP({\cal C})$. 
\end{definition}

The particular (CSP) problem  we are dealing with in this paper is: 

\begin{definition} An instance $\Phi$ of the {\em $k$-Exact Cover} is specified by a set of {\em boolean variables} $V=\{x_{1}, \ldots, x_{n}\}$ and a family of $m\geq 1$ subsets of size $k$ (called {\em clauses}) of $V$. Instance $\Phi$ is satisfiable if there is a truth assignment $A$ of variables in $V$ that makes {\em exactly one} variable in each clause evaluate to TRUE. 
\end{definition}

\begin{definition}
The {\em Hamming distance} between two truth assignments $A$ and $B$, on $n$ 
variables is 
$d_{A,B}=\frac{n}{2}-\frac{1}{2}\sum_{i=1}^nA(x_i)B(x_i).$
The {\it overlap} of truth assignments $A$ and $B$ is the fraction of variables on which the two assignments coincide, that is 
$$\mbox{overlap}(A,B)=\frac{\{i|A(x_i)=B(x_i)\}}{n}.$$
\end{definition}

\begin{definition} A set of constraints ${\cal C}$ is {\em interesting}
if there exist constraints  $C_{0},C_{1}\in {\cal C}$ with
$C_{0}(\overline{0})=C_{1}(\overline{1})=0$, where
$\overline{0},\overline{1}$ are the "all zeros" ("all ones")
assignments. Constraint $C_{2}$ is {\em an implicate of $C_{1}$} iff
every satisfying assignment for $C_{1}$ satisfies $C_{2}$.  A
boolean
 constraint $C$ {\em strongly depends on a literal} if it has an unit
 clause as an implicate. A boolean constraint $C$ {\em strongly
 depends on a 2-XOR relation} if $\exists i,j\in \{1,\ldots,k\}$ such
 that constraint ``$x_{i}\neq x_{j}$'' is an implicate of $C$.
\end{definition}

In the following definition $\epsilon(n)$ is a function whose exact expression is unimportant (in that we get the same results), as long as $n^{1/2}=o(\epsilon(n))$, $\epsilon(n)=o(n)$: 

\begin{definition}\label{overlap-model} Let ${\cal D} = \{0,1,\ldots, t-1\}$, $t\geq 2$ be
a fixed set. Let $q$ be a real number in the range [0,1]. The
problem $q$-overlap-$CSP({\cal C})$ is the decision problem
specified as follows: (i) The input is an instance $\Phi$ of
$CSP_{n,p}({\cal C})$; (ii) The decision problem is whether $\Phi$
has two satisfying assignments $A,B$ such that $overlap(A,B)\in[q-\e(n)n^{-1},q+\e(n)n^{-1}]$. 
The random model for $q$-overlap-$CSP({\cal
C})$ is simply the one for $CSP_{n,m}({\cal C})$. 
\end{definition}

This definition particularizes to our problem as follows: 

\begin{definition}
 Let $q\in (0,1)$. The {\it $q$-overlap $k$-Exact Cover} is a decision problem specified as follows: 

{\bf INPUT:} an instance $F$ of k-Exact Cover with $n$ variables. 

{\bf DECIDE:} whether $F$ has two assignments $A$ and $B$ such that 

\begin{equation}\label{ovl}
\mbox{               overlap}(A,B)\in[q-\e(n)n^{-1},q+\e(n)n^{-1}].
\end{equation}

We refer to a pair $(A,B)$ as in equation~(\ref{ovl}) as {\em satisfying assignments of overlap approximately $q$}. 
\end{definition}

If $A,B$ are two satisfying assignments and $i,j\in \{0,1\}$
we will use notation $A=i,B=j$ ($A=B=i$, when $i=j$) as a shorthand for $\{x:A(x)=i,B(x)=j\}$.

\begin{definition} 
 Let $l\geq 1$ be an integer and let $A,B$ be two satisfying assignments of an instance $\Phi$ of $k$ Exact Cover. Pair $(A,B)$ is called {\em $l$-connected} if there exists a sequence of satisfying assignments $A_{0}, A_{1}, \ldots A_{l}$, $A_{0}=A$, $A_{l}=B$, $A_{i}$ and $A_{i+1}$ are at Hamming distance at most $l$. 

\end{definition}
\begin{definition} 
 For $k\geq 3$, $q\in (0,1)$ define 
\begin{equation} 
 q_{k}=\frac{\sqrt{(k-1)(k-2)}}{2+\sqrt{(k-1)(k-2)}}, 
\end{equation} 
and 
\begin{equation}
 \lambda_{q,k}:=\left\{\begin{array}{ll}

\frac{(k-1)q+\sqrt{(k-1)^2q^{2}+k(k-2)(k-1)(1-q)^2}}{2k} & \mbox{ if } q\in (0,q_{k}), \\	 
q & \mbox{ otherwise.} 
\end{array}
\right. 
\label{lqk}
\end{equation} 
\end{definition} 

Note that for $q<q_{k}$ the expression for $\lambda_{q,k}$ is 
the unique positive root of equation 
\begin{equation}\label{l} 
 \frac{k-2}{x}+\frac{(q-2x)}{(k-1)(\frac{1-q}{2})^{2}+x(q-x)}=0,
\end{equation} 
and is strictly less than $q$. Also, $\lambda_{q,k}>q/2$, since, by~(\ref{lqk}),  $\lambda_{q,k}> \frac{(k-1)q}{k}> q/2$. 

\begin{definition} 
 For $k\geq 3$, $q\in (0,1)$ define $F_{k,q}:(q/2,\lambda_{q,k})\goesto (0,\infty)$ by 
\begin{equation} 
 F_{k,q}(x)=\frac{\ln(\frac{x}{q-x})}{\frac{k-2}{x}+\frac{(q-2x)}{(k-1)(\frac{1-q}{2})^{2}+x(q-x)}}
\end{equation} 

\end{definition}

Note that $F_{k,q}$ is well defined, monotonically increasing (the numerator is increasing, each term in the denominator is decreasing), and that 
$\lim_{x\goesto q/2}F_{k,q}(x)= 0$, $\lim_{x\goesto 
\lambda_{q,k}}F_{k,q}(x)=\infty$. Thus function $F_{k,q}$ is a bijection. 
Denote by $G_{k,q}(x)$ its inverse. 

\section{Results} 

We first remark that 

\begin{lemma} 
For every $k\geq 3$ and $q\in (0,1)$ the problem $q$-overlap $k$-Exact Cover has a sharp threshold. 
\end{lemma}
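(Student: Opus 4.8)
The plan is to derive the sharp threshold from Friedgut's sharp-threshold theorem, in the form of the dichotomy for random constraint satisfaction problems: a monotone, variable-symmetric property of $CSP_{n,m}(\mathcal{C})$ has a \emph{sharp} threshold unless there is a \emph{local} obstruction, meaning some constraint in $\mathcal{C}$ strongly depends on a literal or strongly depends on a $2$-XOR relation. First I would check the structural hypotheses. The property at issue---that a random instance $F$ admits two satisfying assignments $A,B$ with $\mbox{overlap}(A,B)$ as in~(\ref{ovl})---is monotone \emph{decreasing} in the set of clauses, since adding a clause can only remove pairs $(A,B)$ of exact-cover solutions; and it is invariant under permutations of the $n$ variables because the random model draws clauses exchangeably. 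Thus Friedgut's framework applies, and the whole task reduces to ruling out the two local obstructions.

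Next I would recast the two-assignment question as a single CSP over the product domain $\mathcal{D}=\{0,1\}^2$, i.e. the model of Definition~\ref{model} with $t=4$: assign to each variable $x_i$ the pair $(A(x_i),B(x_i))$, and replace each exact-cover clause by the product constraint $\widehat{C}$ on $k$ product-variables asserting that, in each of the two coordinates separately, exactly one of the $k$ entries equals $1$. A satisfying assignment of this product CSP is precisely a pair $(A,B)$ of exact-cover solutions, and the overlap requirement~(\ref{ovl}) selects exactly those product assignments whose number of ``diagonal'' values ($00$ or $11$) lies in $qn\pm\e(n)$.

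I would then verify the two obstruction-freeness conditions, both of which reduce to elementary facts about the exact-cover constraint for $k\geq 3$. The constraint set is interesting, since the single exact-cover constraint $C$ is violated by both $\overline{0}$ and $\overline{1}$ (neither the all-false nor the all-true vector has exactly one true literal), so it serves as both $C_0$ and $C_1$. For the absence of a unit-clause implicate: the satisfying assignments of $C$ are exactly the weight-one vectors, and each coordinate $x_i$ is $1$ in one of them and $0$ in the other $k-1$, so no literal is forced. For the absence of a $2$-XOR implicate: for any $i\neq j$ there is a weight-one satisfying assignment with $x_i=x_j=0$ (put the single true literal elsewhere, possible since $k\geq 3$), so ``$x_i\neq x_j$'' is not implied. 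These computations lift to the product constraint $\widehat{C}$, whose coordinatewise structure means its implicates are built from those of $C$ in each coordinate; hence $\widehat{C}$ again has no unit-type or $2$-XOR-type implicate, and Friedgut's theorem yields the sharp threshold.

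The step I expect to be the main obstacle is the global overlap constraint, which is \emph{not} part of the local CSP structure and so is not covered verbatim by the dichotomy theorem: Friedgut's criterion concerns the unsliced solution space, whereas~(\ref{ovl}) restricts to the overlap-$q$ slice. I would handle this by arguing that passing to the slice introduces no new local obstruction. The overlap is a symmetric linear statistic of the product assignment, so a bounded set of clauses cannot, with constant probability, be responsible for the transition ``admits a slice solution / does not'': any such bounded witness would have to force a literal or a $2$-XOR relation on $\widehat{C}$, which we have excluded, while destroying a macroscopic family of overlap-$q$ pairs is a global phenomenon. Making this transfer fully rigorous---showing that the coarse-threshold characterization survives restriction to the overlap slice---is the delicate point, and is where I would spend the bulk of the argument.
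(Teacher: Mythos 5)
Your verification of the local conditions coincides exactly with what the paper checks: the exact-cover constraint is interesting (both $\overline{0}$ and $\overline{1}$ violate it), it has no unit-clause implicate (each coordinate is $1$ in one weight-one solution and $0$ in the others), and for $k\geq 3$ it has no $2$-XOR implicate (place the single true variable away from $\{x_i,x_j\}$). Monotonicity and symmetry are also fine. But these checks are only the final sentence of the paper's proof; the heart of the lemma lies in the step you explicitly defer, and that step is a genuine gap, not a delicate point to be absorbed in passing. The paper does not derive the sharp threshold from the plain satisfiability dichotomy (Proposition~\ref{dichotomy-threshold}) at all: it invokes a prior result of the first author (Theorem 8 of \cite{istrate-clustering}, quoted as Proposition~\ref{overlap-dichotomy-threshold}) which asserts the sharp threshold directly for the \emph{$q$-overlap} versions under the same three hypotheses. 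In other words, the transfer from the unsliced solution space to the overlap slice---exactly the part you say you would ``spend the bulk of the argument'' on---is the entire content of the cited theorem, and your proposal leaves it unproven.

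Moreover, the two devices you offer for this transfer would not close the hole. First, the product-CSP encoding over $\{0,1\}^2$ correctly captures ``$F$ has two solutions,'' but the requirement~(\ref{ovl}) is a global cardinality constraint on the number of diagonal values $00/11$, which is not expressible as a constraint of any $CSP(\mathcal{C})$ in the sense of Definition~\ref{model}; hence Proposition~\ref{dichotomy-threshold} simply does not apply to the sliced property, even after your encoding. Second, your heuristic---that a coarse threshold would require a bounded witness forcing a literal or a $2$-XOR relation---misstates Friedgut's criterion: coarseness means the property is well approximated by a local (bounded-witness) property, and ruling this out for an overlap-restricted property (which is additionally a \emph{sequence} of properties, through the window $q n \pm \e(n)$) requires the dedicated boosting argument carried out in \cite{istrate-clustering}. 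So as a blind reconstruction your attempt assembles the correct hypotheses but assumes, rather than proves, the key lemma; the paper's proof is complete precisely because that lemma is available as a citation.
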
 
\begin{proof} 
The claim is a simple application of the main result in \cite{istrate-clustering}. Indeed, in \cite{istrate-clustering} we studied the existence of a sharp threshold for $q$-overlap versions of random constraint satisfaction problems. Previously in \cite{istrate-sharp,creignou-daude-threshold}, a characterization of CSP with a sharp threshold was given:  

\begin{proposition}\label{dichotomy-threshold} Consider a generalized satisfiability problem $SAT({\cal C})$ with
${\cal C}$ interesting. (i) If some constraint in ${\cal C}$
strongly depends on one literal then $SAT({\cal C})$ has a coarse
threshold; (ii) If some constraint in ${\cal C}$ strongly depends on
a 2XOR-relation then $SAT({\cal C})$ has a coarse threshold; (iii)
In all other cases $SAT({\cal C})$ has a sharp threshold.
\end{proposition}

The folowing result (Theorem 8 in \cite{istrate-clustering}) shows that under the same conditions as those in \cite{istrate-sharp} the $q$-overlap versions also have a sharp threshold: 

\begin{proposition}\label{overlap-dichotomy-threshold}
Consider a generalized satisfiability problem $SAT({\cal C})$ such
that (i) ${\cal C}$ is interesting (ii) No constraint in ${\cal C}$
strongly depends on a literal; (iii) No constraint in ${\cal C}$
strongly depends on a 2XOR-relation. Then for all values $q\in
(0,1]$ the problem $q$-overlap-$SAT({\cal C})$ has a sharp
threshold.
\end{proposition}

The conditions in Proposition~\ref{overlap-dichotomy-threshold} apply to the $k$-Exact Cover problem, which can be modeled as a CSP with a single $k$-ary constraint $C_{k}(x_1,x_2,\ldots, x_k)$ which requires that exactly one of $x_1,x_2,\ldots, x_k$ be true. This is because constraint $C_k$ is interesting,  does not strongly depend on a literal and, for $k\geq 3$, does not strongly depend on a 2-XOR relation. 
\end{proof} 

Our main result gives lower and upper bounds on the location of this threshold:  

\begin{theorem} \label{main}
 Let $k\geq 3$ and let $r_{up}(q,k)$ be the smallest $r_{*}>0$ such that $\forall r>r_{*}$
\begin{eqnarray*}
 & & r\ln(P_{k}(G_{k,q}(r),(1-q)/2,(1-q)/2,q-G_{k,q}(r)))-G_{k,q}(r)\ln(G_{k,q}(r))-\\ &-& (q-G_{k,q}(r))\ln(q-G_{k,q}(r))-(1-q)\ln((1-q)/2)\leq 0. 
\end{eqnarray*}
Also let 
\begin{equation} 
 r_{lb}(q)=\left\{\begin{array}{ll}
                   \frac{1}{6}\Big[\frac{1}{(1-q)^2}-1\Big] & \mbox { for } q<1-\frac{1}{\sqrt{2}}, \\
		   \frac{1}{6} & \mbox{ otherwise.}
                  \end{array}
	   \right.
\end{equation} 

Then: 

\begin{itemize} 
 \item[(a).] For $r>r_{up}(q,k)$ a random instance of $q$-overlap $k$-Exact Cover with $n$ variables and $m=rn$ clauses has, with probability $1-o(1)$, no satisfying assignments of overlap approximately $q$. 
\item[(b).] For $0<r<r_{lb}(q)$ a random instance of $q$-overlap $3$-Exact Cover with $n$ variables and $m=rn$ clauses has, with probability $1-o(1)$, two satisfying assignments of overlap approximately $q$. 
\end{itemize}
\label{thm:main}
\end{theorem}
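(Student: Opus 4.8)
\emph{Part (a) --- first moment.} The plan is a union bound over the \emph{type} of a candidate pair. For two assignments $A,B$ let $n_{11},n_{10},n_{01},n_{00}$ count the variables on which $(A,B)$ takes the values $(1,1),(1,0),(0,1),(0,0)$; the overlap is $(n_{11}+n_{00})/n$, so overlap $\approx q$ means $n_{11}+n_{00}=qn\pm o(n)$. Let $N$ be the number of pairs of jointly satisfying assignments whose overlap lies in the target window. Since the $m=rn$ clauses are drawn independently, for a \emph{fixed} pair of a given type the probability that all clauses are jointly satisfied is $P_k(\cdot)^m$, where $P_k$ is the probability that one random $k$-clause has exactly one true variable under each of $A$ and $B$. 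Classifying the clause by whether $A$ and $B$ make the same variable true or two different variables true gives $P_k=k\,\tfrac{n_{11}}{n}\big(\tfrac{n_{00}}{n}\big)^{k-1}+k(k-1)\,\tfrac{n_{10}n_{01}}{n^2}\big(\tfrac{n_{00}}{n}\big)^{k-2}$ (one both-true variable and $k-1$ both-false, or one $A$-only-true, one $B$-only-true, and $k-2$ both-false). The number of pairs of a given type is a multinomial coefficient, so by Stirling $\tfrac1n\ln \E[N]\to \max_{\mathrm{type}}\big[r\ln P_k+H\big]$, with $H$ the entropy of the type.

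For fixed overlap $q$ the disagreement mass $1-q$ splits as $(n_{10}+n_{01})/n$; both $H$ and the product $n_{10}n_{01}$ (hence $\ln P_k$) are maximised at $n_{10}=n_{01}=(1-q)n/2$, which collapses the optimisation to the single parameter $x:=n_{00}/n$ (the both-false fraction), with $n_{11}/n=q-x$. The stationarity condition reduces to $r\,P_k'/P_k=\ln\frac{x}{q-x}$, and since $P_k'/P_k=\frac{k-2}{x}+\frac{q-2x}{(k-1)((1-q)/2)^2+x(q-x)}$ this is exactly $r=F_{k,q}(x)$; the admissible stationary point is therefore $x^\star=G_{k,q}(r)\in(q/2,\lambda_{q,k})$, and the maximum value is precisely the left-hand side of the inequality defining $r_{up}(q,k)$. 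As there are only polynomially many types, this value governs the exponential rate of $\E[N]$; for $r>r_{up}(q,k)$ it is negative, so $\E[N]\to0$ and Markov's inequality gives (a).

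\emph{Part (b) --- lower bound.} The key structural remark is that for \emph{any} two exact-cover solutions $A,B$ the disagreement set $T=\{A\neq B\}$ meets every clause in an even number of variables: to keep the true-count equal to $1$ under both assignments the flips inside a clause must cancel, which is impossible for $1$ or $3$ flipped variables. Thus $\chi_T$ lies in the GF$(2)$ kernel (cycle space) of the clause--variable incidence, and, additionally, the true variable of every doubly-hit clause must lie in $T$ for the flip to be realisable. My plan is to \emph{construct} such a $T$ of weight $(1-q)n\pm\e(n)$. Here $r<1/6$ is exactly the subcritical (pre-giant-component) regime of the random $3$-uniform clause hypergraph: w.h.p. every component has $O(\log n)$ variables and all but $o(n)$ variables lie in hypertree components. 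One first fixes a base solution $A$ (tree-like components are trivially satisfiable) and then realises disagreements by independent local flips --- isolated variables supply weight-$1$ flips, single-clause components weight-$2$ flips, and larger small components a bounded number more.

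I would compute the expected flip \emph{capacity} from the $\mathrm{Poisson}(3r)$ degree profile and show it is at least $(1-q)n$ precisely when $r<\frac{q(2-q)}{6(1-q)^2}$, capped at the structural value $1/6$; since the components are independent and each contributes $O(1)$, a bounded-difference inequality concentrates the capacity, and the weight-$1$ isolated flips give the fine granularity needed to land in the window $(1-q)n\pm\e(n)$. The main obstacle is exactly this step: pinning down which small-component configurations admit a realisable flip and summing their contributions to match the stated constant, together with the passage from positive probability to $1-o(1)$. For the latter I expect to first establish that the probability is bounded below by a constant and then invoke the sharp-threshold property (Lemma~1): a constant lower bound at density $r_{lb}(q)$ places the threshold at or above $r_{lb}(q)$, so for every $r<r_{lb}(q)$ the sharp threshold forces the probability up to $1-o(1)$.
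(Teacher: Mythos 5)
Your part (a) is correct and is essentially the paper's own proof: the same first-moment bound over pair types $(|A{=}B{=}0|,|A{=}0,B{=}1|,|A{=}1,B{=}0|,|A{=}B{=}1|)$, the identical per-clause probability $P_k=k\delta\alpha^{k-1}+k(k-1)\beta\gamma\alpha^{k-2}$, the same symmetrization $\beta=\gamma=(1-q)/2$ via convexity/AM--GM, and the same collapse to the one-dimensional stationarity condition $F_{k,q}(x)=r$, i.e.\ $x^{*}=G_{k,q}(r)$, whose rate function changes sign exactly at $r_{up}(q,k)$.

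Part (b) is where you genuinely diverge, and where there is a real gap. The paper proves (b) \emph{algorithmically}: it analyzes a lazy-server variant of LARGEST-CLAUSE by the differential-equation method, obtains explicit trajectories $c_3(t)=(r+\tfrac16)(1-t)^3-\tfrac{1-t}{6}$ and $c_2(t)$, verifies that for $r<1/6$ both unit-clause flows stay below $1/3$, finds that $3$-clauses expire at $t_2=1-1/\sqrt{6r+1}$, and shows the residual 1-in-2 (2-XOR) graph on the remaining $(1-t_2)n$ variables is subcritical Erd\H{o}s--R\'enyi; setting $A=B$ on assigned variables and opposite values on (subsets of) the $O(\log n)$-size components yields pairs of every overlap in $[\,1-1/\sqrt{6r+1},\,1\,]$ with $O(\log n/n)$ granularity, and $r_{lb}(q)$ falls out of $t_2$. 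Your static route --- disagreement sets meet every clause in an even number of variables, with the true variable forced into every doubly-hit clause, so build $T$ from independent local flips inside small components --- is sound as a framework (the parity remark is correct, and your constant-probability-plus-sharp-threshold boost at the end is exactly the paper's closing move). But the entire quantitative content of (b) sits in the step you yourself flag as the main obstacle: showing the flip capacity is at least $(1-q)n$ precisely when $r<\frac{q(2-q)}{6(1-q)^2}$. Since $\frac{q(2-q)}{6(1-q)^2}$ is identically $\frac16\big[\frac{1}{(1-q)^2}-1\big]=r_{lb}(q)$, this claim is a restatement of the theorem, not a derivation. A back-of-the-envelope count shows the gap is not cosmetic: at $r=1/6$ the required capacity is $n/\sqrt{6r+1}=n/\sqrt2\approx 0.707n$, while isolated variables supply only about $e^{-1/2}n\approx 0.607n$ weight-$1$ flips and fully isolated clauses roughly $2\cdot\tfrac16 e^{-3/2}n\approx 0.074n$ more --- so the constant must come from an exact sum of per-component maximum disagreements over the whole component-size distribution (where larger components contribute proportionally to their size, not ``a bounded number more''), and nothing in the degree profile alone produces $n/\sqrt{6r+1}$. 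You would also need to justify that a base solution exists with at least constant probability (hypertree components are satisfiable, but unicyclic ones can fail --- this is exactly where bipartiteness enters the paper's argument). The paper's ODE analysis is precisely the device that manufactures the constant; without it, or an equivalent computation, your part (b) establishes the structure of feasible disagreement sets but not the stated bound $r_{lb}(q)$.
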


Given the non-explicit nature of $r_{up}(q)$, the only way to interpret the lower and upper bounds given in Theorem~\ref{thm:main} is via symbolic and numeric manipulations of the quantities in the equation(s) defining $r_{up}(q)$. A Mathematica notebook to this goal is provided as \cite{github-overlap-ec}. The conclusion of such an analysis is that the bounds in Theorem~\ref{thm:main} are too crude to imply the existence of a discontinuity in overlap in the $k$-Exact Cover problem. 

\section{Proof of the upper bound (Theorem~\ref{thm:main} (a))}

Let $\Phi$ be a random instance of $k$-Exact Cover. 
Our proof relies on the following fundamental observation: 

\begin{lemma}\label{fundamental}
Let $A,B$ be two satisfying assignments, and let 
$C$ be a clause of length $k$ in $\Phi$. Denote by $c_{0},c_{1},c_{2},c_{3}$ the number of variables of $C$ in the sets $A=B=0$, $A=0,B=1$, $A=1,B=0$, $A=B=1$ respectively. 
Clause $C$ is satisfied by both $A$ and $B$ if and only if
\begin{equation}\label{constraints}
\left\{\begin{array}{lll}
c_{0}=k-2, & c_{1}=c_{2}=1, & c_{3}=0\\
\mbox{ or }\\
c_{0}=k-1, & c_{1}=c_{2}=0, & c_{3}=1
\end{array}\right.
\end{equation}
\end{lemma}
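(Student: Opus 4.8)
The plan is to prove Lemma~\ref{fundamental} by a direct, finite case analysis on how a single clause interacts with the two assignments. The key observation is that a clause $C$ of length $k$ is ``exactly covered'' by a truth assignment precisely when exactly one of its $k$ variables is set to TRUE. Since each of the $k$ variables of $C$ falls into exactly one of the four disjoint categories counted by $c_0,c_1,c_2,c_3$ (according to the pair of values $(A(x),B(x))\in\{0,1\}^2$), we have the identity $c_0+c_1+c_2+c_3=k$. The number of TRUE variables under $A$ equals $c_2+c_3$ (those with $A=1$), and the number of TRUE variables under $B$ equals $c_1+c_3$ (those with $B=1$). The clause is simultaneously satisfied by $A$ and $B$ if and only if both of these counts equal $1$.

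First I would write down the two linear constraints $c_2+c_3=1$ and $c_1+c_3=1$, together with the nonnegativity and summation conditions $c_0,c_1,c_2,c_3\geq 0$ and $c_0+c_1+c_2+c_3=k$. Subtracting the two equations gives $c_2=c_1$, so I would set $c_1=c_2=:a$. The common value of the two satisfaction equations then reads $a+c_3=1$, which over the nonnegative integers forces exactly two possibilities: either $c_3=1,\ a=0$, or $c_3=0,\ a=1$. In the first case $c_1=c_2=0$, $c_3=1$, and the summation constraint yields $c_0=k-1$; in the second case $c_1=c_2=1$, $c_3=0$, and summation yields $c_0=k-2$. These are precisely the two cases listed in~(\ref{constraints}).

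The converse direction is immediate: I would simply verify that each of the two stated configurations does make both $A$ and $B$ place exactly one TRUE variable in $C$, by checking that in each case $c_2+c_3=1$ and $c_1+c_3=1$. This completes the biconditional.

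There is essentially no hard obstacle here, since the statement reduces to solving a small integer-linear system; the only point requiring care is the bookkeeping that translates the exact-cover condition (exactly one TRUE per clause, separately for $A$ and for $B$) into the two count equations $c_2+c_3=1$ and $c_1+c_3=1$, and confirming that nonnegativity rules out every solution except the two tabulated ones. I would be careful to note that for $k\geq 3$ the value $c_0=k-2\geq 1$ is legitimate, so both cases are genuinely realizable and the lemma is not vacuous.
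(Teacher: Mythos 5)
Your proposal is correct and follows essentially the same route as the paper: both translate the exact-cover condition for $A$ and $B$ into linear equations on the counts $c_0,c_1,c_2,c_3$ (the paper states the system $c_0+c_1=k-1$, $c_2+c_3=1$, $c_0+c_2=k-1$, $c_1+c_3=1$ and asserts its solutions are~(\ref{constraints})) and solve it over the nonnegative integers. Your write-up is in fact slightly more explicit, since you carry out the subtraction $c_1=c_2$ and the case split that the paper leaves implicit, and you verify the converse direction.
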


\begin{proof}

The conditions that both $A$ and $B$ satisfy $C$ are written as

\begin{equation} 
\left\{
\begin{array}{ll}
c_{0}+c_{1}=k-1, & c_{2}+c_{3}=1\\
c_{0}+c_{2}=k-1, & c_{1}+c_{3}=1, 
\end{array}\right. 
\end{equation}

a system whose solutions are those from equation~(\ref{constraints}).
\end{proof}

An immediate consequence of Lemma~\ref{fundamental} is that the probability that a pair of assignments satisfies a random instance of $k$-EC depends only on numbers $c_{0},c_{1},c_{2},c_{3}$: 

\begin{lemma} 
 Let $c_{0},c_{1},c_{2},c_{3}$ be nonnegative numbers. Then 
\[
 Pr[A,B\models \Phi\mbox{ }|\mbox { }|A=B=0|=c_{0},\ldots |A=B=1|=c_{3}]=P^{*}(c_{0},c_{1},c_{2},c_{3})^{rn}, 
\]
where 
\begin{equation} 
P^*(a,b,c,d)=
\frac{\Big(\begin{array}{l}a\\ k-2\end{array}\Big)\Big(\begin{array}{l}b\\ 
1\end{array}\Big)\Big(\begin{array}{l}c\\ 1\end{array}\Big)+\Big(\begin{array}{l}a\\ 
k-1\end{array}\Big)\Big(\begin{array}{l}d\\ 1\end{array}\Big)}{\Big(\begin{array}{l}n\\ 
k\end{array}\Big)}=
\frac{\Big(\begin{array}{l}a\\ 
k-2\end{array}\Big)}{\Big(\begin{array}{l}n\\ k\end{array}\Big)}\Big[bc+\frac{(a-k+2)}{(k-1)}d\Big]
\end{equation}
\label{lemma2}
\end{lemma}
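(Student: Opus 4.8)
The plan is to exploit the independence of the clauses together with the characterization already established in Lemma~\ref{fundamental}. Since the random model draws the $m=rn$ clauses independently and uniformly (with replacement) from the $\binom{n}{k}$ size-$k$ subsets of $V$, the event ``$A,B\models\Phi$'' is the intersection of the $m$ independent events ``the $i$-th clause is satisfied by both $A$ and $B$''. Conditioning on the region sizes $|A=B=0|=c_0,\ldots,|A=B=1|=c_3$ fixes the law of each individual clause, so all $m$ of these events share a common probability, which I will call $P^*(c_0,c_1,c_2,c_3)$; by independence the conditional probability factorizes as $P^*(c_0,c_1,c_2,c_3)^{rn}$. It therefore remains only to identify $P^*$ with the probability that a single uniformly random $k$-subset is satisfied by both $A$ and $B$.

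To compute this single-clause probability I would simply count the favorable subsets. By Lemma~\ref{fundamental}, a clause $C$ is satisfied by both $A$ and $B$ exactly when its $k$ variables distribute over the four regions either as $(k-2,1,1,0)$ or as $(k-1,0,0,1)$. These two patterns are mutually exclusive (they differ in the number of variables drawn from $A=B=1$), so the favorable subsets may be counted separately and added with no overcounting. Writing $(a,b,c,d)=(c_0,c_1,c_2,c_3)$, there are $\binom{a}{k-2}\binom{b}{1}\binom{c}{1}$ subsets of the first type and $\binom{a}{k-1}\binom{d}{1}$ of the second, while the total number of subsets is $\binom{n}{k}$. Dividing yields the first displayed form of $P^*$.

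The second equality is a purely algebraic simplification. Using $\binom{b}{1}=b$, $\binom{c}{1}=c$, $\binom{d}{1}=d$, together with the identity $\binom{a}{k-1}=\binom{a}{k-2}\cdot\frac{a-k+2}{k-1}$, one factors $\binom{a}{k-2}$ out of the numerator to reach $\binom{a}{k-2}\big[bc+\frac{a-k+2}{k-1}d\big]\big/\binom{n}{k}$, which is the claimed closed form.

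There is essentially no hard step here: the statement is a direct consequence of clause independence and the counting prepared by Lemma~\ref{fundamental}. The only points deserving genuine care are confirming that the with-replacement sampling really produces independent, identically distributed clauses (so that the exponent $rn$ is exact rather than merely asymptotic) and that the two cases of Lemma~\ref{fundamental} are disjoint, so that no inclusion--exclusion correction intrudes into the count.
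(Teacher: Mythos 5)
Your proof is correct and follows essentially the same route as the paper's: condition on the region sizes, use the independence of the $rn$ clauses sampled with replacement to reduce to a single clause, and count the favorable $k$-subsets of the two disjoint types $(k-2,1,1,0)$ and $(k-1,0,0,1)$ given by Lemma~\ref{fundamental}. Your explicit verification of the algebraic identity $\binom{a}{k-1}=\binom{a}{k-2}\cdot\frac{a-k+2}{k-1}$ is a small addition the paper leaves implicit, but the argument is otherwise identical.
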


\begin{proof} 
We will prove that the probability that $A,B$ satisfy a particular clause of $\Phi$ is $P^{*}(c_0,c_1,c_2,c_3)$. The result follows since the formula $\Phi$ is obtained by sampling independently, with replacement, $rn$ clauses. 

Indeed, the total number of clauses is ${{n}\choose {k}}$. By Lemma~\ref{fundamental}, the number of clauses satisfied by both $A$ and $B$ is ${{a}\choose {k-2}}\cdot {{b}\choose {1}}\cdot {{c}\choose {1}}+$. The first term represents the number of clauses with $k-2$ variables in the set $A=B=0$, one in the set $A=0,B=1$ and one in the set $A=1,B=0$ (so that exactly one literal of $C$ is true in both $A$ and $B$). The second term counts the second type of favorable clauses.  
\end{proof} 

We will use Lemma~\ref{lemma2} to derive an upper bound via the first moment method. 

Indeed, let $Z=Z(q,F)$ be a random variable defined as 

\begin{equation} \label{Z}
Z(q,F)=\sum_{A,B} \delta[|d_{A,B}-nq|\leq e(n)]\cdot  \bigone_{{\mathcal S}(F)}(A)\cdot 
\bigone_{{\mathcal S}(F)}(B).
\end{equation}
where $F=F_k(n,rn)$ is a random formula on $n$ variables over $m=rn$ clauses of size $k$, the 
set ${\mathcal S}(F)$ is the set of the EC-assignments to this formula. 

Then: 

\begin{equation} \label{EZ}
E[Z(q,F)]=\sum_{A,B} \delta[|d_{A,B}-nq|\leq e(n)]\cdot  Pr[A,B\models F].
\end{equation}

For fixed values $a,b,c,d$ there are $\binom{n}{a, b, c, d}= \frac{n!}{a!\cdot b!\cdot c!\cdot d!}$ pairs of assignments of type $(a,b,c,d)$. If we denote $\lambda\stackrel{not}{=}a+d=nq\pm\e(n)$ and $\mu\stackrel{not}{=}b+c=n-\lambda$ 
then the system
$$\left\{\begin{array}{l}a+d=\lambda\\b+c=n-\lambda\end{array}\right.$$
has at most $(\lambda+1)(n-\lambda+1)$ solutions in the set of nonnegative integers. Therefore, the number of quadruples $(a,b,c,d)$ in the sum $E[Z]$ is at most 
$$\sum_{\lambda=nq-\e(n)}^{nq+\e(n)}(\lambda+1)(n-\lambda+1)=
\frac{1}{3}(1+2\e(n))(3-\e(n)-\e(n)^2+3n+3n^2q-3n^2q^2)\stackrel{def}{=}M.$$

So 
\begin{equation} 
 P[Z>0]\leq E[Z]\leq M\cdot \max_{(a,b,c,d)}\binom{n}{a, b, c, d}\cdot  P^{*}(a,b,c,d)^{rn}
\end{equation} 
We will compute the maximum on the right-hand side and derive conditions for which this right-hand side tends (as $n\rightarrow \infty$) to zero. 

Indeed, denote $\alpha = \frac{a}{n}, \beta=\frac{b}{n}, \gamma=\frac{c}{n},\delta=\frac{d}{n}$. Applying 
Stirling's formula $n!=(1+o(1))\cdot \big(\frac{n}{e}\big)^{n}\sqrt {2\pi n}$, and also noting that 
\[P^{*}(a,b,c,d)\leq (1+\frac{O(1)}{n})\cdot P_k(\alpha,\beta,\gamma,\delta),
\]
with
\begin{equation} 
P_k(\alpha,\beta,\gamma,\delta)=
\alpha^{k-2}k(k-1)(\beta\gamma +\frac{\alpha\delta}{k-1})
\end{equation} 
we get 
\[
 P[Z>0]\leq M\cdot \theta(1)\cdot  \max_{(\alpha,\beta,\gamma,\delta)}\cdot\big[\Big(\frac{1}{\alpha^{\alpha}\beta^{\beta}\gamma^{\gamma}\delta^{\delta}}\Big)\cdot P(\alpha,\beta,\gamma,\delta)^{r}\big]^n
\]
Define 
\[
 g_{r}(\alpha,\beta,\gamma,\delta)=\frac{P_k(\alpha,\beta,\gamma,\delta)^{r}}{\alpha^{\alpha}\beta^{\beta}\gamma^{\gamma}\delta^{\delta}}
\]
\begin{lemma} 
 For any $r>0$ we have 
\[
 \max\Big\{g_{r}(\alpha,\beta,\gamma,\delta):\alpha+\delta=q, \beta+\gamma=1-q, \alpha,\beta,\gamma,\delta\geq 0\Big\}= g_{r}(\alpha_{*,r},\beta_{*,r},\gamma_{*,r},\delta_{*,r}),
\]
with 
\begin{equation} \label{max}
\left\{
\begin{array}{l}
\alpha_{*,r}=G_{k,q}(r), \\ 
 \beta_{*,r}=\gamma_{*,r}=(1-q)/2,\\
\delta_{*,r}=q-G_{k,q}(r). 
\end{array}\right. 
\end{equation}
\end{lemma}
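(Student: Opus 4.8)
The plan is to pass to logarithms and view the problem as a constrained maximization in the two degrees of freedom remaining after imposing $\alpha+\delta=q$ and $\beta+\gamma=1-q$. Writing $S=\beta\gamma+\frac{\alpha\delta}{k-1}$ so that $P_k=k(k-1)\alpha^{k-2}S$, we have
$$\ln g_r(\alpha,\beta,\gamma,\delta)=r\ln\big(k(k-1)\big)+r(k-2)\ln\alpha+r\ln S-\big(\alpha\ln\alpha+\beta\ln\beta+\gamma\ln\gamma+\delta\ln\delta\big).$$
First I would eliminate $\beta,\gamma$ by a symmetry/concavity argument rather than by full Lagrange multipliers. For fixed $\alpha,\delta$ (hence fixed $\alpha\delta$) and fixed sum $\beta+\gamma=1-q$, the quantity $S$ is an increasing function of the product $\beta\gamma$, so $r\ln S$ is maximized exactly when $\beta\gamma$ is largest, i.e.\ at $\beta=\gamma=\frac{1-q}{2}$; simultaneously the entropy term $-(\beta\ln\beta+\gamma\ln\gamma)$ is, by strict concavity of $t\mapsto-t\ln t$, also maximized at $\beta=\gamma=\frac{1-q}{2}$. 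Since both $\beta,\gamma$-dependent parts of $\ln g_r$ attain their maxima at the same symmetric point, the global maximum must have $\beta_{*,r}=\gamma_{*,r}=\frac{1-q}{2}$, and the problem reduces to the one-dimensional maximization of $\phi(\alpha):=\ln g_r(\alpha,\frac{1-q}{2},\frac{1-q}{2},q-\alpha)$ over $\alpha\in[0,q]$.

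Next I would differentiate $\phi$. With $\beta=\gamma=\frac{1-q}{2}$ one has $S=(\frac{1-q}{2})^2+\frac{\alpha(q-\alpha)}{k-1}$ and $\delta=q-\alpha$, and a direct computation gives
$$\phi'(\alpha)=r\Big[\frac{k-2}{\alpha}+\frac{q-2\alpha}{(k-1)(\frac{1-q}{2})^2+\alpha(q-\alpha)}\Big]-\ln\frac{\alpha}{q-\alpha}.$$
The bracketed factor is exactly the denominator $D(\alpha)$ appearing in $F_{k,q}$ and the left-hand side of equation~(\ref{l}), so whenever $D(\alpha)\neq 0$ we may factor $\phi'(\alpha)=D(\alpha)\big(r-F_{k,q}(\alpha)\big)$. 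The stationarity equation $\phi'(\alpha)=0$ with $\alpha$ in the domain $(q/2,\lambda_{q,k})$ therefore reads $F_{k,q}(\alpha)=r$, whose unique solution is $\alpha=G_{k,q}(r)$ since $G_{k,q}$ is the inverse of the bijection $F_{k,q}$. This already identifies the candidate $(\alpha_{*,r},\delta_{*,r})=(G_{k,q}(r),q-G_{k,q}(r))$ of~(\ref{max}).

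It remains to prove that this interior critical point is the global maximum over all of $[0,q]$, and this sign analysis is the main obstacle. I would split $[0,q]$ according to the sign of $D(\alpha)$. On $(0,q/2]$ both terms of $D$ are positive while $\ln\frac{\alpha}{q-\alpha}\le 0$, so $\phi'>0$; on $(q/2,\lambda_{q,k})$ one has $D>0$ (the numerator $\ln\frac{\alpha}{q-\alpha}$ and $F_{k,q}$ being positive there), so by monotonicity of $F_{k,q}$ the factor $r-F_{k,q}(\alpha)$ is positive for $\alpha<G_{k,q}(r)$ and negative for $\alpha>G_{k,q}(r)$; and on $(\lambda_{q,k},q)$---nonempty only in the case $q<q_k$, where $\lambda_{q,k}$ is precisely the positive root of $D(\alpha)=0$ from~(\ref{l})---we have $D<0$ while $\ln\frac{\alpha}{q-\alpha}>0$, forcing $\phi'<0$. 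Combined with $\phi(\alpha)\to-\infty$ as $\alpha\to 0^+$ (since $\alpha^{k-2}\to 0$ for $k\ge 3$), this shows $\phi$ increases on $(0,G_{k,q}(r))$ and decreases on $(G_{k,q}(r),q)$, so $G_{k,q}(r)$ is the unique global maximizer. The delicate points to verify carefully are the sign of $D$ on $(\lambda_{q,k},q)$ and the separate treatment of $q<q_k$ versus $q\ge q_k$ (where $\lambda_{q,k}=q$ and that last subinterval is empty), together with the comparison of $\phi$ at the finite boundary value $\alpha=q$.
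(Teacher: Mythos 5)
Your proposal is correct and follows essentially the same route as the paper's proof: the identical Jensen/AM-GM symmetrization forcing $\beta_{*,r}=\gamma_{*,r}=\frac{1-q}{2}$, followed by the same one-dimensional stationarity condition $F_{k,q}(\alpha)=r$, i.e.\ $\alpha_{*,r}=G_{k,q}(r)$, $\delta_{*,r}=q-G_{k,q}(r)$. If anything, your factorization $\phi'(\alpha)=D(\alpha)\bigl(r-F_{k,q}(\alpha)\bigr)$ with the sign analysis of $D$ on $(0,q/2]$, $(q/2,\lambda_{q,k})$ and $(\lambda_{q,k},q)$ (distinguishing $q<q_k$ from $q\ge q_k$) is carried out more carefully than in the paper, which simply asserts that $t(\alpha)$ attains its maximum on $[0,q]$ at the critical point without verifying unimodality or the boundary behavior.
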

\beginproof 

First, it is easy to see that 
\begin{equation}\label{beta-gamma}
 g_{r}(\alpha, \beta,\gamma,\delta)\leq g_{r}\Big(\alpha, \beta_{*,r},\gamma_{*,r},\delta\Big).
\end{equation} 
Indeed, function $x\ln(x)$ is convex, having the second derivative positive, and $e^{x}$ is increasing so, by Jensen's inequality, 
\[
\beta^{\beta}\gamma^{\gamma}=e^{\beta\ln(\beta)+\gamma\ln(\gamma)}\geq e^{(\beta+\gamma)\ln(\frac{\beta+\gamma}{2})}=\Big(\frac{\beta+\gamma}{2}\Big)^{\beta+\gamma}= \beta_{*}^{\beta_{*,r}}\gamma_{*,r}^{\gamma_{*,r}}. 
\]
On the other hand since $\beta\gamma \leq \Big(\frac{\beta+\gamma}{2}\Big)^2=\beta_{*,r}\gamma_{*,r}$, we have  $P(\alpha,\beta,\gamma,\delta)\leq P\Big(\alpha, \beta_{*,r},\gamma_{*,r},\delta\Big)$ 
and equation~(\ref{beta-gamma}) follows. 

Also
\begin{equation}\label{alpha-beta}
g_{r}\Big(\alpha, \beta_{*,r},\gamma_{*,r},\delta\Big)\leq 
g_{r}\Big(\alpha_{*,r}, \beta_{*,r},\gamma_{*,r},\delta_{*,r}\Big)
\end{equation} 

Indeed, replacing $\delta = q - \alpha$, the expression 
\begin{eqnarray*}
t(\alpha) & = & \ln g_{r}(\alpha,\beta_{*,r},\gamma_{*,r},q-\alpha)= \\ & = & r\ln\Big(P_k(\alpha,\beta_{*,r},\gamma_{*,r},q-\alpha)\Big)-\alpha\ln(\alpha)-(q-\alpha)\ln(q-\alpha)-\beta_{*,r}\ln(\beta_{*,r})-\gamma_{*,r}\ln(\gamma_{*,r})
\end{eqnarray*}
 is a function of $\alpha$ whose derivative is 
\begin{eqnarray*}
 t^\prime{(\alpha)}& = & r \frac{P^{\prime}_{k}(\alpha,\beta_{*,r},\gamma_{*,r},q-\alpha)}{P_k(\alpha,\beta_{*r},\gamma_{*r},q-\alpha)}-\ln(\alpha)-1+\ln(q-\alpha)+1 = \\
 & = & r [\frac{k-2}{\alpha}+\frac{q-2\alpha}{\Big(\frac{1-q}{2}\Big)^2+\alpha(q-\alpha)}]+\ln(\frac{q-\alpha}{\alpha}). 
\end{eqnarray*}

so $t(\alpha)$ has a maximum on $[0,q]$ at $\alpha_{*,r}$ which is a solution of equation  
\begin{equation}
 \frac{r(k-2)}{\alpha}+\frac{r(q-2\alpha)}{(k-1)\Big(\frac{1-q}{2}\Big)^2+\alpha(q-\alpha)}=\ln(\frac{\alpha}{q-\alpha}),
 \label{eq:tprime} 
\end{equation} 
or $F_{k,q}(\alpha_{*,r})=r$. In other words $\alpha_{*,r}=G_{k,q}(r)$ and $\delta_{*,r}=q-G_{k,q}(r)$. 
\endproof 
\vspace{5mm}

Formula~(\ref{max}) implies that $P[Z>0]\stackrel{\small n\goesto \infty}{\goesto} 0$ as long as $t(\alpha_{*,r})<0$. 
The critical value $r_{up}(q,k)$ is 
therefore given by equation $t(\alpha_{*,r})=0$, that is

\begin{equation} 
\begin{aligned}
 & & r\ln(P_{k}(G_{k,q}(r),(1-q)/2,(1-q)/2,q-G_{k,q}(r)))-G_{k,q}(r)\ln(G_{k,q}(r))-\\ &-& (q-G_{k,q}(r))\ln(q-G_{k,q}(r))-(1-q)\ln((1-q)/2)= 0. 
\end{aligned}
\label{eq:t}
\end{equation} 
\endproof

For $k=3$, denoting (for simplicity) $
\alpha=G_{3,q}(r)$, we have $P_{3}(\alpha,\beta,\gamma,\delta)=6\alpha(\beta\gamma+\frac{\alpha\delta}{2})$, so the equation~(\ref{eq:t}) becomes
\begin{align*} 
r\ln(6\alpha[(\frac{1-q}{2})^2+\frac{\alpha(q-\alpha)}{2}])-\alpha\ln(\alpha)-(q-\alpha)\ln(q-\alpha)=(1-q)\ln((1-q)/2)
\end{align*} 
while equation~(\ref{eq:tprime}) becomes
\[
r[1+\frac{\alpha(q-2\alpha)}{2\Big(\frac{1-q}{2}\Big)^2+\alpha(q-\alpha)}]=\alpha \ln(\frac{\alpha}{q-\alpha}), 
\]
Attempting a substitution of the type $\frac{\alpha}{q-\alpha}=t$ in this last equation seems to turn the function $F_3$ into a generalized version of the Lambert function. However, this generalization seems to be different from the versions already existing in the literature \cite{mezo2022lambert}, so this attempt does not seem fruitful.  

We refer again to the Mathematica notebook provided as \cite{github-overlap-ec}. In particular let us remark that the maximum value of $r_{up}(q)$ is reasonably close to upper bound the threshold for $3$-Exact Cover derived using the first-moment method in \cite{knysh-2004}.

\section{Proof of the lower bound (Theorem~\ref{thm:main} (b))}

We will use a constructive method. Just as in \cite{cs/050837}, we will derive a lower bound from the probabilistic analysis of an algorithm. However, the algorithm {\bf will not} be the one from \cite{cs/050837}. Instead, we will investigate (a variant of) the algorithm LARGEST-CLAUSE in Figure~\ref{alg}.

\begin{figure}
\fbox{
\dahntab{
\ \ \ \ \=\ \ \ \ \=\ \ \ \ \=\ \ \ \ \= \ \ \ \ \=\ \ \ \ \= \\
Algorithm {\bf LargestClause} \\
\\
{\bf INPUT}: a formula $\Phi$ \\
\\
if ($\Phi$ contains a unit clause)\\
\> choose a random unit clause ${l}$\\
\> set $l$ to TRUE\\
\> \> if this creates a contradiction FAIL \\
\> \> else call the algorithm recursively \\
else if ($\Phi$ contains a clause of length $\geq 3$)\\
\> \> choose a random clause $C$ of maximal length\\
\> \> choose a random literal $l$ of $C$ \\
\> \> set $l$ to zero and simplify the formula\\
\> else \\
\> \> \> create a graph $G$ containing an edge $(x,y)$\\
\> \> \> for any clause $x\oplus y$ in $\Phi$; \\
\> \> \> if ($G$ is not bipartite) OR (some connected component has size $\geq f(n)$)\\
\> \> \> \> FAIL\\
\> \> \> else \\
\> \> \> \> choose one variable in each connected component of G\\
\> \> \> \> create satisfying assignments $A$ and $B$ \\
\> \> \> \> by setting all chosen variables to one (zero) \\
\> \> \> \> and then propagating these values to all variables in $G$. \\
\> \> \> \> \keyw{return} $(A,B)$.   
}
}
\caption{Algorithm LARGEST-CLAUSE}\label{alg}
\end{figure}

Intuitively, the reason we prefer the algorithm LARGEST-CLAUSE to the one from \cite{cs/050837} is simple: unlike \cite{cs/050837}, our goal is not to simply solve an instance of $k$-EXACT COVER, but to create \textbf{two satisfying assignments} of controlled overlap. We we would like to accomplish that via an algorithm that iteratively assigns values to variables and is left (at some point) with solving a 2-XOR SAT formula. Our aim is to keep the number of set variables to a minimum, in order to create satisfying assignments with as large an overlap as possible. But that means that one must ``destroy'' all clauses of length different from two as fast as possible. Instead, the algorithm in \cite{cs/050837} is focused on killing clauses of length 2. 

The algorithm may seem incompletely specified, as its performance depends on function $f(n)$. As it will turn out, the precise specification of function $f(n)$ in the algorithm LARGEST-CLAUSE will not matter for our purposes, as long as $f$ is a function that grows asymptotically faster than the size of the giant component in a certain subcritical Erd\H{o}s-R\'enyi random graph, which is with high probability $O(log(n))$.

To analyze (versions of) Algorithm LARGEST-CLAUSE, we denote by $C_{i}(t)$, $i\geq 2$, the number of clauses of length $i$ that are present after $t$ variables have been set. Also define
$P(t),N_{t}$ to be the number of positive (negative) unit clauses present at stage $t$. Finally, define
 functions $c_{1},c_{2},c_{3},p,n:(0,1)\rightarrow {\bf R}_{+}$ by $
 c_{i}(\alpha)=C_{i}(\alpha \cdot n)/n$, 
and similar relations for functions $p(\cdot),n(\cdot)$. 
We will use a standard method, {\em the principle of deferred decisions} to analyze algorithm LARGEST-CLAUSE. See \cite{optas-lazy-server} for a tutorial. 

It is easy to show by induction that at any stage $t$, conditional on the four-tuple $(P(t),N(t),C_{2}(t),C_{3}(t))$, the remaining formula is uniform. 

We divide the algorithm in two phases: in the {\em first phase} there exist clauses of length three. In the {\em second phase} only clauses of length one and two exist. 

If a variable is set to TRUE then a 
1-in-$i$ clause containing that variable is turned into $i-1$ negative unit clauses. If a variable is set to FALSE then a 
1-in-$i$ clause is turned into a 1-in-$(i-1)$ clause, in particular a 1-in-2 clause is turned into a positive unit clasue. The dynamics is displayed in Figure~\ref{dyn}. 

\begin{figure}\label{dyn}
\begin{center} 
\includegraphics[width=8cm]{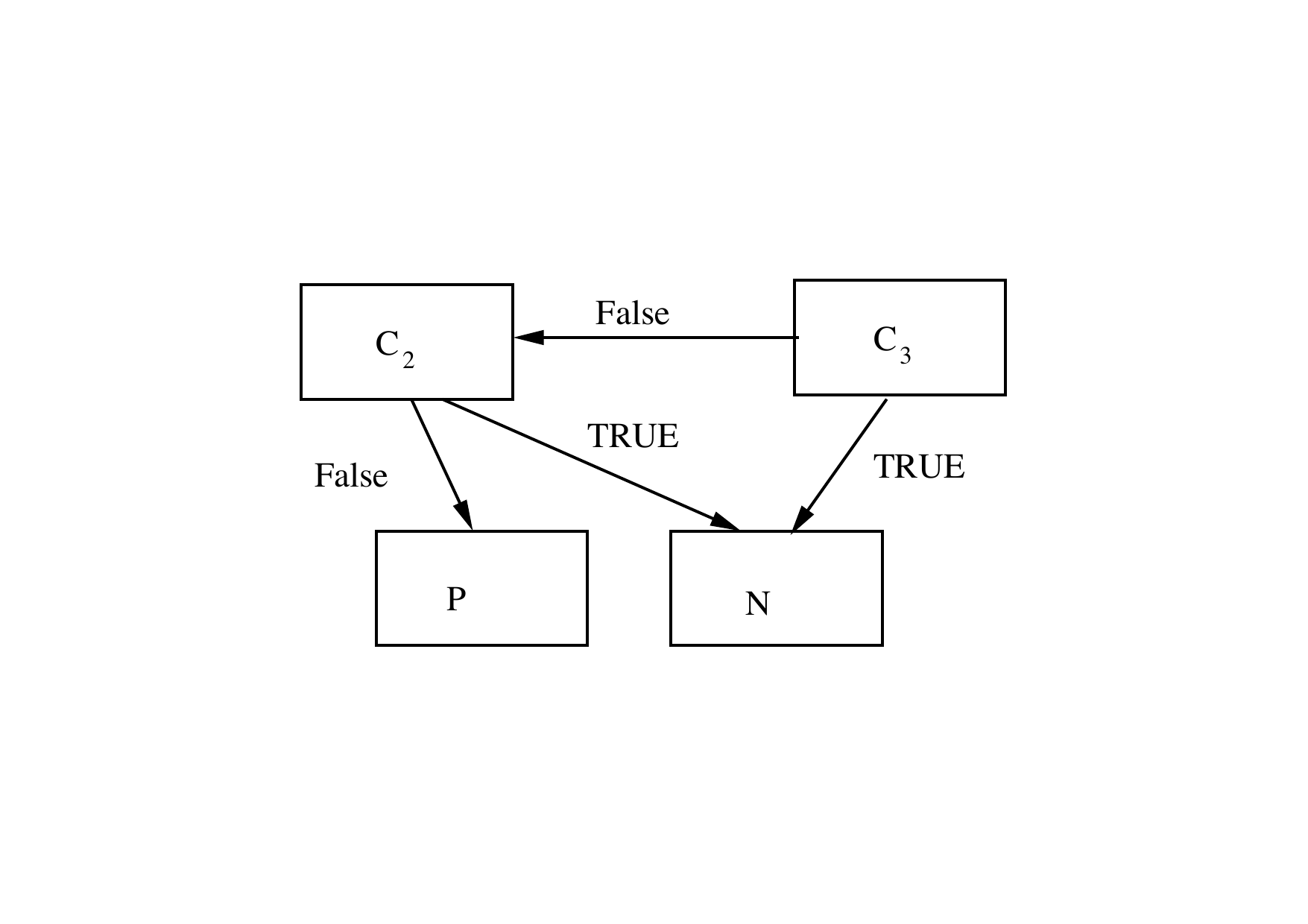}
\end{center} 
\caption{The dynamics of algorithm LARGEST-CLAUSE.}
\end{figure}

The different dynamics of the flows in the cases when a positive (negative) literal is set makes the direct analysis of algorithm LARGEST-CLAUSE difficult. Therefore, we will instead analyze a 
version of the algorithm, given in Figure~\ref{lazy-alg}, using a ``lazy-server'' \cite{optas-lazy-server} idea. Specifically, instead of always trying to simplify the unit clauses, we will do so probabilistically (see Figure~\ref{lazy-alg} for details). 

Since the problems $q$-overlap EXACT COVER have a sharp threshold, it is enough to prove, for $r<r_{lb}(q)$ that the algorithm finds a 
pair of assignments of overlap $r$ with probability $\Omega(1)$. This will be enough to conclude that with probability $1-o(1)$ two satisfying assignments of overlap $r$ exist. 

\begin{figure}
\fbox{
\dahntab{
\ \ \ \ \=\ \ \ \ \=\ \ \ \ \=\ \ \ \ \= \ \ \ \ \=\ \ \ \ \= \\
Algorithm {\bf LazyLargestClause} \\
\\
{\bf INPUT}: a formula $\Phi$ \\
\\
{\bf if} (at least one of the alternatives 1,2,3 below applies) \\
\> take one of the following actions with probabilities  $\lambda_{1}(t),\lambda_{2}(t),\lambda_{3}(t)$, respectively: \\
\> 1. if ($\Phi$ contains a positive unit clause)\\
\> \> choose a random positive unit clause ${l}$\\
\> \> set $l$ to TRUE\\
\> \> \> if this creates a contradiction FAIL \\
\> \> else \\
\> \> \> set a random variable to TRUE \\
\> 2. if ($\Phi$ contains a negative unit clause)\\
\> \> choose a random negative unit clause $\overline{l}$\\
\> \> set $l$ to FALSE\\
\> \> \> if this creates a contradiction FAIL \\
\> \> else \\ 
\> \> \> set a random variable to FALSE \\
\> 3. if ($\Phi$ contains a clause of length $\geq 3$)\\
\> \> choose a random clause $C$ of maximal length\\
\> \> choose a random literal $l$ of $C$ \\
\> \> set $l$ to FALSE\\
{\bf else} \\
\> \> \> run the corresponding bipartite graph construction \\
\> \> \> of algorithm LARGEST-CLAUSE. \\ 
}
}
\caption{The ``lazy-server'' version of algorithm LARGEST-CLAUSE}\label{lazy-alg}
\end{figure}

Let $U_{P}(t),U_{N}(t),U_{3}(t)$ be 0/1 variables that are one exactly when choice 1 (2,3) is selected, 0 otherwise. We can write the following recurrence relations describing the dynamics of the four-tuple $(P(t),N(t),C_{2}(t),C_{3}(t))$:
\begin{equation} 
\left\{ \begin{array}{l}
C_{3}(t+1)=C_{3}(t)-U_{3}(t)-\Delta_{3}(t),\\
C_{2}(t+1)=C_{2}(t)-\Delta_{2}(t)+\Delta_{3,2}(t),\\
P(t+1)=P(t)-U_{P}(t)-\Delta_{1,P}(t)+\Delta_{2,P}(t),\\
N(t+1)=N(t)-U_{N}(t)-\Delta_{1,N}(t)+\Delta_{2,N}(t)+\Delta_{3,N}(t),
\end{array}
\right.
\end{equation} 
where 
\begin{equation} 
 \left\{ \begin{array}{l}
\Delta_{3}(t)\stackrel{d}{=}Bin\Big(C_{3}(t)-U_{3}(t),\frac{3}{n-t}\Big). \\  
\Delta_{2}(t)=\Delta_{2,N}(t)+\Delta_{2,P}(t)\stackrel{d}{=}Bin\Big(C_{2}(t),\frac{2}{n-t}\Big). \\
\Delta_{3,2}(t)\stackrel{d}{=}U_{3}(t)+(U_{N}(t)+U_{3}(t))\cdot Bin\Big(C_{3}(t)-U_{3}(t),\frac{3}{n-t}\Big)\\
\Delta_{3,N}(t)\stackrel{d}{=}2U_{P}(t)\cdot Bin\Big(C_{3}(t),\frac{3}{n-t}\Big)\\
\Delta_{2,P}(t)\stackrel{d}{=}(U_{N}(t)+U_{3}(t))\cdot Bin\Big(C_{2}(t),\frac{2}{n-t}\Big)\\
\Delta_{2,N}(t)\stackrel{d}{=}U_{P}(t)\cdot Bin\Big(C_{2}(t),\frac{2}{n-t}\Big)\\
\Delta_{1,P}(t)\stackrel{d}{=}Bin\Big(P(t)-U_{P}(t),\frac{1}{n-t}\Big)\\
\Delta_{1,N}(t)\stackrel{d}{=}Bin\Big(N(t)-U_{N}(t),\frac{1}{n-t}\Big)\\
         \end{array}
\right.
\end{equation}

By an analysis completely similar to that of algorithm for random $k$-SAT (see e.g. \cite{optas-lazy-server}), we derive the following system of equations that describe the average trajectory path of Algorithm LAZY LARGEST-CLAUSE: 
\begin{equation}\label{diffeq}
\left\{\begin{array}{l}
c_{3}^{\prime}(t)=-\lambda_{3}(t)-\frac{3c_{3}(t)}{(1-t)}.\\
c_{2}^{\prime}(t)=-\frac{2c_{2}(t)}{(1-t)}+\frac{3c_{3}(t)}{(1-t)}\cdot (\lambda_{2}(t)+\lambda_{3}(t)),\\
\end{array}
\right.
\end{equation}

with initial conditions $(c_{2}(0),c_{3}(0))=(0,r)$.

In this paper we will make the simplest choice
\begin{equation} \lambda_{1}(t)=\lambda_{2}(t)=\lambda_{3}(t)=1/3.
\end{equation}

Differential equations~(\ref{diffeq}) describe the dynamics of algorithm LARGEST-CLAUSE only for $t\in [t_{3},t_{2})$, where $t_{3}=0$ and $t_{2}\in (0,1)$ is the smallest solution of equation $c_{3}(t)=0$.  

Simple computations lead us to formulas: 
\begin{equation}\label{first-stage}
 \left\{\begin{array}{l}
c_{3}(t)=(r+\frac{1}{6})(1-t)^{3}-\frac{1-t}{6},\\
c_{2}(t)=\frac{(1-t)^2}{3}-\frac{(1-t)}{3}+2(r+\frac{1}{6})t(1-t)^{2},\\
        \end{array}
\right.
\end{equation}

which describe the dynamics of algorithm LARGEST-CLAUSE in range $0\leq t < t_{2}=1-\frac{1}{\sqrt{6r+1}}$.

The average flow into positive unit clauses is 
\begin{eqnarray*}
 F^{P}_{2}(t) &:= & \frac{2}{3}\cdot \frac{2c_{2}(t)}{1-t}+\frac{1}{3}\cdot\frac{2\cdot 3c_{3}(t)}{1-t}= \\ & = & \frac{4}{3} \Big[\frac{(1-t)^2}{3}-\frac{(1-t)}{3}+2(r+\frac{1}{6})t(1-t)\Big]+2(r+\frac{1}{6})(1-t)^{2}-\frac{1}{3}.
\end{eqnarray*}
\[
(F^{P}_{2})^{\prime}(t)=\frac{8r(1-2t)}{3}-4(r+\frac{1}{6})(1-t)=\Big(\frac{4r}{3}-\frac{2}{3}\Big)(1-t)-\frac{8r}{3}<0, 
\]
so $F^{P}_{2}(t)$ has a 
maximum at $0$, equal to $2r$. For $r<1/6$ this is less than $1/3$, so it is balanced by being given the opportunity (with probability 1/3) to consume a positive unit clause, if any. 
 
The average flow into negative unit clauses is  
\[
 F^{N}_{2}(t) =\frac{1}{3}\cdot\frac{2c_{2}(t)}{1-t}=\frac{2}{3}\cdot \Big[\frac{(1-t)}{3}-\frac{1}{3}+2(r+\frac{1}{6})t(1-t).\Big]=\frac{2t}{9}\Big[(6r+1)(1-t)-1\Big]. 
\]

The maximum of $F^{N}_{2}(t)$ is reached at $t=\frac{3r}{6r+1}$, which is in the interval $(t_{3},t_{2})$
for $r> 0$, and is equal to $\frac{2r^2}{6r+1}=\frac{r}{3}(1-\frac{1}{6r+1})$, which is definitely less than $\frac{1}{3}$, for $r<1/6$. 

The conclusion is that for $r<1/6$  with probability $1-o(1)$ both flows into positive and negative unit clauses can be handled by the lazy server with choice $\lambda_{1}=\lambda_{2}=\lambda_{3}=1/3$ without creating contradictory clauses. 

Around stage $t_{2}n\pm o(n)$ clauses of length three and one run out. We are left with a system of $(c_{2}(t_{2})+o(1))n$
1-in-2 clauses in the remaining $\overline{n}=(1-t_{2})n$ variables. 
Consider graph $G$ corresponding to these equations, where for every equation $x\oplus y=1$ we add edge $(x,y)$ to $G$. 

By the uniformity lemma {\em 
$G$ can be seen as an Erd\H{o}s-Renyi random graph $G(\overline{n},\frac{\mu}{\overline{n}})$}, with  probability coefficient 
\[
\mu =2c_{2}(t_{2})/(1-t_{2}) =3F_{2}(t_{2}). 
\]
  
Our maximum computation shows that for $r\in (0,1/6)$, $3F_{2}(t_{2})<1$. Thus $G$ is a subcritical random graph, whose connected components are w.h.p. of size $O(\log n)$. With constant probability (depending only on $\mu$), $G$ is a bipartite graph. In this situation giving a value to an arbitrary node uniquely determines the values of variables in the connected component. 

We create two assignments $A$ and $B$ as follows: 
\begin{enumerate} 
 \item On variables $x$ set by algorithm LARGEST-CLAUSE, $A(x)=B(x)$, equal to the value given by the algorithm.  
\item On variables in graph $G$ $A$ and $B$ take opposite values. This can be accomplished by giving $A,B$ different values on a set of fixed variables, one in each connected component of $G$. 
\end{enumerate}

When graph $G$ is bipartite $A$ and $B$ are satisfying assignments. When the connected components of $G$ are of size $O(\log n)$ we can create a path from $A$ to $B$ consisting satisfying assignments by consecutively flipping values of variables on which $A$ and $B$ are different, one connected component at a time. The overlap of $A$ and $B$ is equal to $1-\frac{1}{\sqrt{6r+1}}$. 

It follows that for any $q\in (0,1)$, the $q$-overlap Exact Cover is satisfiable w.h.p. for $q>1-\frac{1}{\sqrt{6r+1}}$, i.e. $\frac{1}{6r+1}>(1-q)^2$, which can be rewritten as  $r<r_{lb}(q)$. 
\endproof

\section{Remarks} 

The condition $r<1/6$ in Theorem~\ref{main} has an easy probabilistic interpretation: it is the location of the phase transition for the random 3-uniform hypergraph \cite{phase-transition-random-hypergraph}. In this range most connected components are small and tree-like or unicyclic, so the space of variables breaks down in independent clusters of size $O(\log n)$. Thus we should expect that all overlaps in 
some range $(\lambda,1)$ are satisfied with probability $1-o(1)$, which is exactly what happens, according to Theorem~\ref{main}, for $\lambda = 1-\frac{1}{\sqrt{2}}$.

In fact we can state more: in this regime there is a single cluster of solutions, and the bounds on the overlap we provide are in fact bounds on the diameter of this cluster. 

\begin{theorem}
 Let $r<1/k(k-1)$. There exists $C>0$ such that, with probability $1-o(1)$ (as $n\goesto \infty$), if $\Phi$ is a random instance of $k$-Exact-Cover with $n$ variables and $rn$ clauses, any two satisfying assignments of $\Phi$ are $C\log(n)$ connected. 
\end{theorem}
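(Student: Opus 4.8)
The plan is to reduce the connectivity question to the component structure of the underlying random hypergraph and then to exhibit an explicit short path between any two solutions. First I would associate to a random instance $\Phi$ its $k$-uniform hypergraph $H=H(\Phi)$ on the $n$ variables, with one hyperedge per clause (the sampling with replacement only creates multi-edges and does not affect connectivity, so it is harmless). The decisive structural fact is that each clause, being a $k$-subset of variables, lies inside a single connected component of $H$; hence the clauses partition among the components $K_1,\dots,K_s$ of $H$, and an assignment satisfies $\Phi$ if and only if its restriction to each $K_j$ satisfies the clauses contained in $K_j$. In other words, satisfiability factors over components.

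The probabilistic input I would invoke is that for $r<1/(k(k-1))$ the hypergraph $H$ is \emph{subcritical}. Exploring the component of a fixed vertex, the number of incident hyperedges is asymptotically Poisson with mean $\approx kr$, and each contributes $k-1$ fresh vertices, so the mean offspring of the exploration process is $\approx k(k-1)r<1$; this is exactly the branching number whose value $1$ locates the giant-component threshold at $r=1/(k(k-1))$ (consistent with the $r<1/6$ regime for $k=3$ appearing in the proof of Theorem~\ref{thm:main}(b), and with \cite{phase-transition-random-hypergraph}). Standard subcritical estimates then yield a constant $C=C(k,r)$ such that, with probability $1-o(1)$, \emph{every} connected component of $H$ has size at most $C\log n$. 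Call this event $\mathcal{E}$. Note that $\mathcal{E}$ depends only on $\Phi$, not on any pair of assignments, so on $\mathcal{E}$ the conclusion will hold simultaneously for all pairs of solutions.

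On the event $\mathcal{E}$ the remaining argument is deterministic. Given two satisfying assignments $A,B$, order the components $K_1,\dots,K_s$ and set $X_0=A$ and, for $1\le j\le s$, let $X_j$ agree with $B$ on $K_1\cup\cdots\cup K_j$ and with $A$ elsewhere, so that $X_s=B$. Each $X_j$ is satisfying: on the components already switched it coincides with $B$ and on the rest with $A$, and by the factorization above both choices satisfy every clause inside their own component. Consecutive assignments $X_{j-1}$ and $X_j$ differ only on $K_j\cap\{x:A(x)\neq B(x)\}\subseteq K_j$, hence at Hamming distance at most $|K_j|\le C\log n$. Thus $A$ and $B$ are joined by a path of satisfying assignments each step of which flips at most $C\log n$ variables, i.e. $(A,B)$ is $C\log n$-connected; since $\mathcal{E}$ holds with probability $1-o(1)$, the theorem follows.

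The genuinely delicate point is entirely contained in the probabilistic step: establishing the $O(\log n)$ bound on the maximum component size throughout the subcritical phase, together with the verification that $1/(k(k-1))$ is precisely the right threshold via the mean-offspring computation $k(k-1)r$. Everything after conditioning on small components is elementary, the only conceptual ingredient being the observation that clauses never straddle two components, which is what makes satisfiability a product of local conditions and lets us recolor one component at a time without ever leaving the solution set.
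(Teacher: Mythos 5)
Your proposal is correct and takes essentially the same route as the paper's own proof: both invoke the subcriticality of the formula hypergraph for $r<1/(k(k-1))$ to conclude (via \cite{phase-transition-random-hypergraph}) that all components have size $O(\log n)$ w.h.p., and then join any two satisfying assignments by switching values one component at a time, using the fact that satisfiability factors over the variable-disjoint subformulas. Your extra details (the mean-offspring computation $k(k-1)r$ locating the threshold, and the harmlessness of multi-edges from sampling with replacement) merely make explicit what the paper delegates to the cited reference.
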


\beginproof
Since the formula hypergraph $H$ of $\Phi$ is subcritical, there exists \cite{phase-transition-random-hypergraph} $C>0$ such that w.h.p. all connected components of $H$ have size at most $C\log(n)$. That means that formula $\Phi$ is the decomposition of several variable-disjoint formulas $\Phi_{1},\ldots, \Phi_{p}$. In turn, satisfying assignments for $\Phi$ are obtained by concatenating satisfying assignments for these formulas. 

This argument immediately implies that any two satisfying assignments of $\Phi$ are $C\log(n)$ connected: Let $A,B$ be two such satisfying assignments, and let $A_{1},B_{1}$, $(A_{2},B_{2})$, $\ldots, (A_{s},B_{s})$ be the restrictions of $A$ and $B$ on the components on which they differ. 

One can obtain a path from $A$ to $B$ as follows (where variables $x$ such that $A(x)=B(x)$ are ommitted from representation):  
\begin{eqnarray*}
 A & = &(A_{1},A_{2},\ldots, A_{s})\goesto (B_{1},A_{2},\ldots, A_{s})\goesto (B_{1},B_{2},\ldots, A_{s})\goesto \ldots \goesto \\ 
&\goesto & (B_{1}, \ldots, B_{s-1},A_{s})\goesto (B_{1},B_{2},\ldots B_{s})=B. 
\end{eqnarray*}

The intermediate assignments are satisfying assignments since formulas $\Phi_{1}, \ldots$, $\Phi_{p}$ are disjoint. They are 
at distance at most $C\log(n)$ because of the upper bound on the component size of $H$. 

\endproof

Using the above result we obtain the following analog of the result proven in \cite{istrate-clustering} for 2-SAT: 

\begin{corollary} 
 For $r<1/k(k-1)$ a random instance of $k$-Exact Cover has a single cluster of satisfying assignments and an overlap distribution with continuous support. 
\end{corollary}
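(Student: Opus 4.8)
The plan is to read off both assertions directly from the preceding theorem, exploiting the fact that for $r<1/k(k-1)$ the formula hypergraph $H$ of $\Phi$ is subcritical, so that with probability $1-o(1)$ the instance splits into variable-disjoint subformulas $\Phi_{1},\ldots,\Phi_{p}$ supported on connected components of size at most $C\log(n)=o(n)$, and every satisfying assignment is a concatenation $(\sigma_{1},\ldots,\sigma_{p})$ of satisfying assignments of the $\Phi_{i}$.

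First I would settle the single-cluster claim. Recall that a cluster is a maximal collection of satisfying assignments that are pairwise joined by paths of satisfying assignments whose consecutive members lie at Hamming distance $o(n)$; here the relevant scale is $O(\log n)$. The theorem just proved states precisely that, with probability $1-o(1)$, any two satisfying assignments of $\Phi$ are $C\log(n)$-connected. Hence all satisfying assignments belong to one and the same equivalence class, i.e. $\Phi$ has a single cluster.

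Next, for the continuity of the support of the overlap distribution, I would show that the set of overlaps realised by pairs of satisfying assignments contains no gap. Fix a reference solution $A$ and an arbitrary solution $B$, and follow the interpolating path $A=Y_{0},Y_{1},\ldots,Y_{l}=B$ supplied by the theorem, in which $Y_{j}$ and $Y_{j+1}$ differ only on a single component of $H$ and hence satisfy $d_{Y_{j},Y_{j+1}}\leq C\log(n)$. The reference overlaps then obey $\mbox{overlap}(A,Y_{0})=1$, $\mbox{overlap}(A,Y_{l})=\mbox{overlap}(A,B)$, and $|\mbox{overlap}(A,Y_{j+1})-\mbox{overlap}(A,Y_{j})|\leq C\log(n)/n=o(1)$, since replacing the assignment on one small component changes the number of agreeing variables by at most the size of that component. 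Thus the achievable reference overlaps $\mbox{overlap}(A,Y_{j})$ form an $o(1)$-net of the whole interval $[\mbox{overlap}(A,B),1]$. Letting $B$ range over all solutions (in particular one attaining the smallest overlap with $A$) and then letting $n\goesto\infty$, the $o(1)$ gaps close up and the support of the overlap distribution is seen to be a single interval, i.e. continuous.

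The hard part is not the combinatorics above but the bookkeeping needed to pass to the limit: one must check that the minimum attainable overlap concentrates, so that the left endpoint of the interval is well defined, and that the discrete $o(1)$-nets obtained at finite $n$ genuinely converge to a full interval rather than to a sparser closed set. Both follow from the uniformity of the random instance together with the $O(\log n)$ bound on component sizes, but they are the points that require care. Note that the first-moment machinery of the upper-bound sections plays no role here; the entire argument rests on the disjoint-decomposition structure of the subcritical regime.
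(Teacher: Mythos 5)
Your proposal matches the paper's (implicit) derivation: the corollary is stated as a direct consequence of the preceding theorem, with the single cluster following from the $C\log(n)$-connectedness of all pairs of satisfying assignments, and the continuous support from exactly the interpolation argument you give, where flipping one component of size $O(\log n)$ at a time changes the overlap with a reference solution by $O(\log n/n)=o(\e(n)n^{-1})$, so every overlap value between the minimum and $1$ is realized within the tolerance window of Definition~\ref{overlap-model}. Your added caution about concentration of the minimum overlap goes beyond what the paper spells out but does not change the route, which is the same analog-of-2-SAT argument from \cite{istrate-clustering} that the paper invokes.
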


The relative weakness of the bound $r<1/k(k-1)$ comes from our suboptimal choice of parameters $\lambda_{1}(t),\lambda_{2}(t),\lambda_{3}(t)$. For instance, for $k=3$ the bound $r<1/6$ comes entirely from handling positive unit clauses, while we have no problem satisfying negative ones, since the flow $F^{N}_{2}(t)$ always stays below one. This suggests that we are disproportionately often taking care of negative unit literals.

In what follows we sketch an approach for a better choice of these parameters. We were not able to explicitly calculate $\lambda_{1}(t),\lambda_{2}(t),\lambda_{3}(t)$, so we are unable to offer an improved analysis of the LAZY LARGEST-CLAUSE algorithm. 

First, the algorithm has to be able to satisfy the positive unit flow, so 
\[
 \lambda_{1}(t)\geq (\lambda_{2}(t)+\lambda_{3}(t))\cdot \frac{2c_{2}(t)}{1-t}. 
\]
Thus 
\[
 \frac{\lambda_{1}(t)}{1-\lambda_{1}(t)}\geq \frac{2c_{2}(t)}{1-t}
\]
in other words 
\[
 \lambda_{1}(t)\geq \frac{2c_{2}(t)}{1-t+2c_{2}(t)}. 
\]
First, the algorithm has to be able to handle the negative unit flow, so 
\[
 \lambda_{2}(t)\geq \lambda_{1}(t)\frac{6c_{3}(t)+2c_{2}(t)}{1-t}
\]
We choose 
\begin{equation} 
\left\{\begin{array}{l}
        \lambda_{1}(t)=\frac{2c_{2}(t)+\epsilon}{1-t+2c_{2}(t)}\\
	\lambda_{2}(t)=\frac{(2c_{2}(t)+\epsilon)(6c_{3}(t)+2c_{2}(t)+\epsilon)}{(1-t)(1-t+2c_{2}(t))},\\
	\lambda_{3}(t)=1-\lambda_{1}(t)-\lambda_{2}(t)=\frac{(2c_{2}(t)+\epsilon)(6c_{3}(t)+2c_{2}(t)+\epsilon)}{(1-t)(1-t+2c_{2}(t))}. 
       \end{array}
\right.
\end{equation}

It is an open problem if this approach can be completed to a full analysis. 

\section{Conclusions}

The obvious question raised by this work is to improve our bounds enough to display the discontinuity of overlap distribution, a property of $k$-Exact Cover we believe to be true. 

Note that there are obvious candidate approaches to improving our bounds: first, the lower bound could be improved by trying a rigorous version of the (heuristic) upper bound approach of Knysh et al. 
\cite{knysh-2004}. Or, it could be improved by finding explicit expressions for the parameters in (and explicitly analyzing) the LAZY LARGEST-CLAUSE algorithm, along the lines described in the previous section. Neither one of these two approaches looks particularly tractable, though. 

As for the upper bound, an obvious candidate is the second moment method. We have attempted such an approach. The problem is that it seems to require optimizing of a function of 16 variables without enough obvious symmetries that would make the problem tractable. 

\section{Acknowledgments} 

The authors thank the anonymous referees for useful comments, suggestions and corrections.  

\bibliographystyle{eptcs}

\end{document}